\def\op#1{\mathop{{\it\fam0} #1}\limits}
\newcommand{\beq}{\begin{equation}}
\newcommand{\eeq}{\end{equation}}
\newcommand{\ben}{\begin{eqnarray}}
\newcommand{\een}{\end{eqnarray}}
\newcommand{\be}{\begin{eqnarray*}}
\newcommand{\ee}{\end{eqnarray*}}
\newcommand{\bea}{\begin{eqalph}}
\newcommand{\eea}{\end{eqalph}}
\newcommand{\al}{\alpha}
\newcommand{\bt}{\beta}
\newcommand{\dl}{\delta}
\newcommand{\la}{\lambda}
\newcommand{\m}{\mu}
\newcommand{\g}{\gamma}
\newcommand{\vt}{\vartheta}
\newcommand{\cG}{{\mathfrak g}}
\newcommand{\up}{\upsilon}
\newcommand{\si}{\sigma}
\newcommand{\Si}{\Sigma}
\newcommand{\bb}{{\mathbf 1}}
\newcommand{\dr}{\partial}
\newcommand{\ar}{\op\longrightarrow}
\newenvironment{eqalph}{\stepcounter{equation}
\setcounter{equationa}{\value{equation}} \setcounter{equation}{0}

\begin{eqnarray}}{\end{eqnarray}\setcounter{equation}{\value{equationa}}}
\newcounter{equationa}
\newcounter{remark}
\newcounter{example}
\newcounter{theorem}
\newcounter{proposition}
\newcounter{lemma}
\newcounter{corollary}
\newcounter{definition}
\def\theremark{\arabic{remark}}
\def\thedefinition{\arabic{theorem}}
\newenvironment{proof}{{\it Proof.}}{
\medskip }
\newenvironment{theorem}{\refstepcounter{theorem} \medskip{\bf
Theorem \thedefinition.}\it}{\medskip }
\newenvironment{lemma}{\refstepcounter{theorem} \medskip{\bf  Lemma
\thedefinition.}\it }{\medskip }
\newcommand{\mar}[1]{}
\begin{document}

\hbox{}

\begin{center}

{\Large\bf Theory of Classical Higgs Fields. I. Matter Fields}

\bigskip

G. SARDANASHVILY, A. KUROV

\medskip

Department of Theoretical Physics, Moscow State University, Russia

\bigskip

\end{center}

\begin{abstract}
Higgs fields are attributes of classical gauge theory on a
principal bundle $P\to X$ whose structure Lie group $G$ if is
reducible to a closed subgroup $H$. They are represented by
sections of the quotient bundle $P/H\to X$. A problem lies in
description of matter fields with an exact symmetry group $H$.
They are represented by sections of a composite bundle which is
associated to an $H$-principal bundle $P\to P/H$. It is essential
that they admit an action of a gauge group $G$.
\end{abstract}

\bigskip

Higgs fields are attributes of classical gauge theory on a
principal bundle $P\to X$ if its symmetries are spontaneously
broken \cite{book09,sard92,sard06a}. Spontaneous symmetry breaking
is a quantum phenomenon, but it is characterized by a classical
background Higgs field \cite{sard08}. Therefore, such a phenomenon
also is considered in the framework of classical field theory when
a structure Lie group $G$ of a principal bundle $P$ is reduced to
a closed subgroup $H$ of exact symmetries.

One says that a structure Lie group $G$ of a principal bundle $P$
is reduced to its closed subgroup $H$ if the following equivalent
conditions hold:

$\bullet$ a principal bundle $P$ admits a bundle atlas with
$H$-valued transition functions,

$\bullet$ there exists a principal reduced subbundle of $P$ with a
structure group $H$.

\noindent A key point is the following.

\begin{theorem}\label{redsub} \mar{redsub}
There is one-to-one correspondence between the reduced
$H$-principal subbundles $P^h$ of $P$ and the global sections $h$
of the quotient bundle $P/H\to X$ possessing a typical fibre
$G/H$.
\end{theorem}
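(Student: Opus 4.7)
\medskip

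\noindent\emph{Proof proposal.} The plan is to exhibit two natural constructions going in opposite directions and then check that they are mutually inverse. The pivotal observation, which I would establish first, is that the right action of $G$ on $P$ restricts to a free and proper action of $H$, so that $P\to P/H$ is itself a principal $H$-bundle; moreover, the orbit map $P\to P/H$ is $G$-equivariant, and the induced projection $P/H\to P/G=X$ is a fibre bundle associated with $P$, with typical fibre $G/H$. This structural fact is what makes the whole correspondence work and is where most of the local trivialisation bookkeeping lives.

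From subbundles to sections, given a reduced $H$-principal subbundle $P^h\subset P$, I would form the composition
\[
P^h\hookrightarrow P\ar P/H.
\]
Since $P^h$ carries a transitive $H$-action on each fibre and the arrow $P\to P/H$ collapses $H$-orbits, the composition is constant on $H$-orbits, hence descends to a map $h\colon X=P^h/H\to P/H$; fibrewise it lands in the correct fibre of $P/H\to X$, so it is a global section.

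From sections to subbundles, given $h\colon X\to P/H$, I would pull back the principal $H$-bundle $P\to P/H$ along $h$ to obtain an $H$-principal bundle $h^*P\to X$. Equivalently, set
\[
P^h:=\{p\in P\mid \pi_{P/H}(p)=h(\pi_X(p))\},
\]
where $\pi_{P/H}\colon P\to P/H$ and $\pi_X\colon P\to X$; this is a closed subset of $P$ stable under the right $H$-action, and local triviality follows from local sections of $P\to P/H$ composed with $h$. Hence $P^h$ is a reduced $H$-subbundle of $P$.

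Finally, I would verify that the two constructions are inverses: starting with $P^h$, the section it produces sends $x\in X$ to the common value of $\pi_{P/H}(p)$ for $p$ in the fibre $P^h_x$, and reconstructing the subbundle from this section returns precisely $P^h$; starting with a section $h$, the subbundle $P^h$ as defined above projects under $\pi_{P/H}$ to $h(X)$, and reading off the associated section recovers $h$. The main obstacle, in my view, is not the bijection itself but the preliminary step of identifying $P\to P/H$ as a principal $H$-bundle and showing $P/H\to X$ has typical fibre $G/H$; once these are in hand, the two directions of the correspondence are essentially formal, reducing to naturality of the pullback of principal bundles.
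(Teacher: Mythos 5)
Your proposal is correct and coincides with the route the paper itself takes: the paper states Theorem \ref{redsub} without a written proof (deferring to the cited literature), but it later identifies $P^h$ with the pullback $h^*P_\Si$ of the $H$-principal bundle $P\to P/H$ along $h$, which is exactly your section-to-subbundle direction, and descending $P^h\hookrightarrow P\to P/H$ to a section of $P/H\to X$ is the standard inverse. You also correctly isolate the one nontrivial structural input, namely that $H$ closed in $G$ makes $P\to P/H$ a principal $H$-bundle and $P/H\to X$ a $P$-associated bundle with typical fibre $G/H$.
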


In classical field theory, global sections of a quotient bundle
$P/H\to X$ are treated as classical Higgs fields
\cite{book09,sard06a}.

In general, there is topological obstruction to reduction of a
structure group of a principal bundle to its subgroup. In
particular, such a reduction occurs if the quotient $G/H$ is
diffeomorphic to a Euclidean space $\mathbb R^m$. For instance,
this condition is satisfied if $H$ is a maximal compact subgroup
$H$ of a Lie group $G$ \cite{ste}.

Given a principal bundle $P\to X$ whose structure group is
reducible to a closed subgroup $H$, one meets a problem of
description of matter fields which admit only an exact symmetry
subgroup $H$. Here, we aim to show that they adequately are
represented by sections of the composite bundle $Y$ (\ref{b3225})
which is associated to an $H$-principal bundle $P\to P/H$. A key
point is that $Y$ also is a $P$-associated bundle (Theorem
\ref{k12}), and it admits the action (\ref{rty}) of a gauge group
$G$. In the case of a pseudo-orthogonal group $G=SO(1,m)$ and its
maximal compact subgroup $H=SO(m)$, we obtain this action in the
explicit form (\ref{k30}).

Forthcoming Part II of our work is devoted to Lagrangians of these
matter fields and Higgs fields.

In accordance with Theorem \ref{redsub}, we consider a reduced
$H$-principal subbundles $P^h$ of $P$. Let
\mar{510f24}\beq
Y^h=(P^h\times V)/H \label{510f24}
\eeq
be an associated vector bundle with a typical fibre $V$ which
admits a group $H$ of exact symmetries. One can think of its
sections $s_h$ as describing matter fields in the presence of a
Higgs fields $h$ and some principal connection $A_h$ on $P^h$.

For different Higgs fields $h$, $h'$, the corresponding reduced
$H$-principle bundles $P^h$, $P^{h'}$ and, consequently, the
associated bundles $Y^h$, $Y^{h'}$ however fail to be isomorphic.

\begin{lemma} \label{k7} \mar{k7} If the quotient $G/H$ is
isomorphic to a Euclidean space $\mathbb R^m$, all $H$-principal
subbundles $P^h$ of a $G$-principal bundle $P$ are isomorphic to
each other \cite{ste}.
\end{lemma}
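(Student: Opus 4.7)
My plan is to reformulate the claim through Theorem \ref{redsub} and then reduce it to the homotopy invariance of pullback principal bundles. The first step is to identify, for each section $h:X\to P/H$, the reduced subbundle $P^h$ with a pullback: the quotient map $P\to P/H$ is itself an $H$-principal bundle, since the right $H$-action inherited from $G$ on $P$ is free with orbit space $P/H$ by definition; the subset of $P$ sitting over $h(X)\subset P/H$ is then tautologically the pullback $h^{\ast}(P\to P/H)$, and under the correspondence of Theorem \ref{redsub} this coincides with $P^h$. Hence proving $P^h\cong P^{h'}$ as $H$-principal bundles over $X$ amounts to proving $h^{\ast}P\cong (h')^{\ast}P$.

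Next I would argue that any two smooth global sections $h$ and $h'$ of the quotient bundle $P/H\to X$ are homotopic through sections. The hypothesis enters here: since the typical fibre $G/H$ is diffeomorphic to $\mathbb{R}^m$, it is contractible, so all its homotopy groups vanish. Over the paracompact base $X$ the standard Steenrod obstruction argument \cite{ste} then produces a continuous homotopy $h^t:X\to P/H$, $t\in[0,1]$, with $h^0=h$, $h^1=h'$ and $h^t$ a section for every $t$. In the smooth setting one can also construct $h^t$ explicitly, by covering $X$ by trivialisations of $P/H\to X$, forming the straight-line homotopy $(1-t)h+th'$ in each Euclidean fibre chart, and gluing these local convex combinations by a partition of unity.

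The classical homotopy invariance of $H$-principal bundles over a paracompact base \cite{ste} then delivers $h^{\ast}P\cong (h')^{\ast}P$, and combining this with the first step yields $P^h\cong P^{h'}$. The main obstacle will be the middle step: the gluing of the fibrewise straight-line homotopies is only canonical up to a choice of Euclidean coordinate on each fibre, and one needs to check that the partition-of-unity construction still produces a genuine section, i.e.\ that the convex-combination procedure is compatible with the transition functions of $P/H\to X$. Either one appeals directly to the abstract Steenrod statement for bundles with contractible fibre, which bypasses this issue, or one invokes a $G$-invariant convex (for instance geodesic) structure on $G/H\cong\mathbb{R}^m$, available in the motivating symmetric-space cases such as $G/H=SO(1,m)/SO(m)$.
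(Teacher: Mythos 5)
The paper offers no proof of this lemma at all---it simply cites Steenrod---so there is nothing to compare line by line; your argument correctly reconstructs the standard proof behind that citation. All three of your steps are sound: the identification $P^h=h^{*}P_{\Sigma}$ is exactly how the paper itself describes reduced subbundles (as the restriction of the $H$-principal bundle $P\to P/H$ to $h(X)$); vertical homotopy of any two global sections follows from the contractibility (``solidity'' in Steenrod's sense) of the fibre $G/H\cong\mathbb R^{m}$; and homotopy invariance of pullbacks of principal bundles over a paracompact base finishes the argument. On the one delicate point you flag, the clean route is the first one you name: apply Steenrod's section-extension theorem to the bundle $P/H\to X$ pulled back over $X\times[0,1]$, with the section prescribed on the closed set $X\times\{0,1\}$; this bypasses the need to glue fibrewise straight-line homotopies, whose convex combinations are indeed not compatible with the (generally non-affine) $G/H$-valued transition functions unless one fixes a $G$-invariant convex structure as you suggest. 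The only hypothesis worth making explicit is paracompactness (or normality) of $X$, which is implicit throughout since $X$ is a manifold.
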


Nevertheless, if there exists an isomorphism between different
reduced subbundles $P^h$ and $P^{h'}$, this is an automorphism of
a $G$-principal bundle $P$ which sends $h$ to $h'$
\cite{book09,sard06a}. Therefore, a $V$-valued matter field can be
regarded only in a pair with a certain Higgs field $h$. A goal
thus is to describe the totality of these pairs $(s_h,h)$ for all
Higgs fields $h$.

For this purpose, let us consider a composite bundle
\mar{b3223a}\beq
P\to P/H\to X, \label{b3223a}
\eeq
where
\mar{b3194}\beq
 P_\Si=P\ar^{\pi_{P\Si}} P/H \label{b3194}
\eeq
is a principal bundle with a structure group $H$ and
\mar{b3193}\beq
\Si=P/H\ar^{\pi_{\Si X}} X \label{b3193}
\eeq
is a $P$-associated bundle with a typical fibre $G/H$ where a
structure group $G$ acts on the left.

Note that, given a global section $h$ of $\Si\to X$ (\ref{b3193}),
the corresponding reduced $H$-principal bundle $P^h$ is the
restriction $h^*P_\Si$ of the $H$-principal bundle $P_\Si$
(\ref{b3194}) to $h(X)\subset \Si$. Herewith, any atlas $\Psi_h$
of $P^h$ defined by a family of its local sections $\{U,z_h\}$
also is an atlas of a $G$-principal bundle $P$  and that of a
$P$-associated bundle $\Si\to X$ (\ref{b3193}) with $H$-valued
transition functions. With respect to this atlas $\Psi_h$ of
$\Si$, a global section $h$ of $\Si$ takes its values into the
center of the quotient $G/H$.

With the composite bundle (\ref{b3223a}), let us consider the
composite bundle
\mar{b3225}\beq
\pi_{YX}: Y\ar^{\pi_{Y\Si}} \Si\ar^{\pi_{\Si X}} X \label{b3225}
\eeq
where $Y\to \Si$ is a $P_\Si$-associated bundle
\mar{bnn}\beq
Y=(P\times V)/H \label{bnn}
\eeq
with a structure group $H$. Given a global section $h$ of the
fibre bundle $\Si\to X$ (\ref{b3193}) and the corresponding
reduced principal $H$ subbundle $P^h=h^*P$, the $P^h$-associated
fibre bundle (\ref{510f24}) is the restriction
\mar{b3226}\beq
Y^h=h^*Y=(h^*P\times V)/H \label{b3226}
\eeq
of a fibre bundle $Y\to\Si$ to $h(X)\subset \Si$.

As a consequence, every global section $s_h$ of the fibre bundle
$Y^h$ (\ref{b3226}) is a global section of the composite bundle
(\ref{b3225}) projected onto a section $h=\pi_{Y\Si}\circ s$ of a
fibre bundle $\Si\to X$. Conversely, every global section $s$ of
the composite bundle $Y\to X$ (\ref{b3225}) projected onto a
section $h=\pi_{Y\Si}\circ s$ of a fibre bundle $\Si\to X$ takes
its values into the subbundle $Y^h$ (\ref{b3226}). Hence, there is
one-to-one correspondence between the sections of the fibre bundle
$Y^h$ (\ref{510f24}) and the sections of the composite bundle
(\ref{b3225}) which cover $h$.

Thus, it is the composite bundle $Y\to X$ (\ref{b3225}) whose
sections describe the above mentioned totality of pairs $(s_h, h)$
of matter fields and Higgs fields in classical gauge theory with
spontaneous symmetry breaking \cite{book09,sard92,sard06a}. In
particular, one can show the following \cite{book09,sard06a}.

$\bullet$ An atlas $\{z_\Si\}$ of an $H$-principal bundle $H\to
\Si$ and, accordingly, of an associated bundle $Y\to\Si$ yields an
atlas $\{z_h=z_\Si\circ h\}$ of an $H$-principal bundle $P^h$ and,
consequently, $Y^h$.

$\bullet$ An $H$-principal connection $A_\Si$ on a fibre bundle
$Y\to \Si$ yields a pullback $H$-principal connection $A_h$ on
$Y^h$.

A key point under consideration here is the following.

\begin{theorem} \label{k12} \mar{k12}
The composite bundle $\pi_{YX}:Y\to X$ (\ref{b3225}) is a
$P$-associated bundle with a structure group $G$. Its typical
fibre is an $H$-principal bundle
\mar{wes}\beq
\pi_{WH}:W=(G\times V)/H \to G/H \label{wes}
\eeq
associated with an $H$-principal bundle
\mar{ggh}\beq
\pi_{GH}:G\to G/H. \label{ggh}
\eeq
\end{theorem}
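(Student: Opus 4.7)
The plan is to exhibit an explicit left $G$-action on $W=(G\times V)/H$ and then to identify the composite bundle $Y$ with the $P$-associated bundle $(P\times W)/G$. First I would define the left $G$-action on $W$ by
\[
g_0\cdot[a,v]_H:=[g_0a,v]_H,
\]
where $H$ is taken to act on $G\times V$ on the right by $(a,v)\cdot h=(ah,h^{-1}v)$. Well-definedness of this left action is immediate from the commutativity of left $G$-multiplication and right $H$-multiplication on $G$. Since the action covers the canonical left action of $G$ on $G/H$, the projection $\pi_{WH}:W\to G/H$ exhibits $W$ as the fibre bundle with typical fibre $V$ associated to the $H$-principal bundle $G\to G/H$, yielding the second assertion of the theorem.

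Next I would write down the candidate isomorphism $\Phi:P\times W\to Y=(P\times V)/H$ by
\[
\Phi\bigl(p,[a,v]_H\bigr):=[pa,v]_H,
\]
where $pa$ uses the right action of $a\in G$ on $P$. Well-definedness in the second argument follows from $[pah,h^{-1}v]_H=[pa,v]_H$ in $Y$. Invariance under the diagonal right $G$-action $(p,w)\cdot g_0=(pg_0,g_0^{-1}\cdot w)$ on $P\times W$ is the direct computation
\[
\Phi\bigl(pg_0,[g_0^{-1}a,v]_H\bigr)=[pg_0\cdot g_0^{-1}a,v]_H=[pa,v]_H.
\]
Hence $\Phi$ descends to a bundle morphism $\wt\Phi:(P\times W)/G\to Y$ over $X$.

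The main step is to verify that $\wt\Phi$ is a bijection. Surjectivity is immediate from $[p,v]_H=\Phi(p,[e,v]_H)$. For injectivity, suppose $\Phi(p_1,[a_1,v_1]_H)=\Phi(p_2,[a_2,v_2]_H)$. Since $p_1a_1$ and $p_2a_2$ lie in the same fibre of $P\to X$, so do $p_1,p_2$, and we can write $p_2=p_1g_0$ for a unique $g_0\in G$. The equality in $Y$ then produces some $h\in H$ with $a_1h=g_0a_2$ and $v_1=hv_2$. A short manipulation using the defining relations of $W$ and these identities shows that $(p_2,[a_2,v_2]_H)$ coincides with $(p_1,[a_1,v_1]_H)\cdot g_0$ in $P\times W$, so the two elements have the same image in $(P\times W)/G$. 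Smoothness of $\wt\Phi$ and of its inverse follows by passing to a $G$-valued trivialisation of $P$, in which $\wt\Phi$ becomes the tautological identity on $U\times W$.

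The main obstacle is the bookkeeping in the injectivity step, where two $H$-equivalences (one defining $Y$ and one defining $W$) must be reconciled through the single element $g_0\in G$ identifying the fibres of $P$; once this is handled, everything else reduces to the standard principal-bundle formalism. The identification of $Y$ as a $P$-associated bundle with structure group $G$ is then complete, since any atlas of $P$ with $G$-valued transition functions induces an atlas of $Y$ in which those same transition functions act on the fibre $W$ by the $G$-action defined in the first step.
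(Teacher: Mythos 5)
Your proposal is correct and takes essentially the same route as the paper: both identify $Y=(P\times V)/H$ with the two-step associated bundle $(P\times W)/G$, the paper via a chain of equivalence-class manipulations starting from $P=(P\times G)/G$, and you via the explicit descended isomorphism $\wt\Phi$. The only extra content in the paper's proof is the coordinate expression of the induced $G$-action on $W$ relative to an atlas of $G\to G/H$, which is not needed for the statement itself.
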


\begin{proof}
Let us consider a principal bundle $P\to X$ as a $P$-associated
bundle
\be
P=(P\times G)/G, \qquad (pg',g)=(p,g'g), \qquad p\in P, \qquad
g,g'\in G,
\ee
whose typical fibre is a group space of $G$ which a group $G$ acts
on by left multiplications. Then the quotient (\ref{bnn}) can be
represented as
\be
&& Y=(P\times (G\times V)/H)/G, \\
&& (pg',(g\rho,v))= (pg',(g,\rho v))=(p,g'(g,\rho v))=(p,(g'g,\rho v)), \qquad \rho\in H.
\ee
It follows that $Y$ (\ref{bnn}) is a $P$-associated bundle with
the typical fibre $W$ (\ref{wes}) which the structure group $G$
acts on by the law
\mar{iik}\beq
g: (G\times V)/H \to (gG\times V)/H. \label{iik}
\eeq
This is a familiar induced representation of $G$ \cite{mack}. It
is an automorphism of the fibre bundle $W\to G/H$ (\ref{wes}).
Given an atlas $\Psi_{GH}=\{(U_\al,z_\al:U_\al\to G)\}$ of the
$H$-principal bundle $G\to G/H$ (\ref{ggh}), the induced
representation (\ref{iik}) reads
\mar{iik1}\ben
&& g: (\si,v)=(z_\al(\si),v)/H\to (\si',v')=(gz_\al(\si),v)/H= \nonumber \\
&& \qquad (z_\bt(\pi_{GH}(gz_\al(\si)))\rho,v)/H=
(z_\bt(\pi_{GH}(gz_\al(\si))),\rho v)/H,\qquad g\in G,\label{iik1}\\
&& \rho=z_\bt(\pi_{GH}(gz_\al(\si)))^{-1}gz_\al(\si)\in
H, \qquad \si\in U_\al, \qquad  \pi_{GH}(gz_\al(\si))\in U_\bt.
\nonumber
\een
If $H$ is a Cartan subgroup of $G$, an example of the induced
representation (\ref{iik1}) is the well-known non-linear
realizations \cite{col,book09,jos}.
\end{proof}

A problem however lies in the existence of an atlas of $Y$ both as
$P$-and $P_\Si$-associated bundles. Given an atlas
$\Psi=\{U_i,z_i:U_i\to P\}$ of $P$, we have the trivialization
charts
\mar{k1}\beq
 \psi_i: \pi_{YX}^{-1}(U_i)\to
U_i\times W \label{k1}
\eeq
of an associated bundle $Y\to X$. An atlas $\Psi_{GH}$ of $G\to
G/H$ in turn yields the trivialization charts
\mar{k2}\beq
\psi_\al: \pi_{WH}^{-1}(U_\al)\to U_\al\times V \label{k2}
\eeq
of the fibre bundle $W$ (\ref{wes}). Then the compositions of
trivialization morphisms (\ref{k1}) and (\ref{k2}) define fibred
coordinate charts
\mar{k3}\beq
(O; x^\mu,\si^m,v^A), \qquad O=\psi_i^{-1}(U_i\times
\pi_{WH}^{-1}(U_\al))\to U_i\times \pi_{WH}^{-1}(U_\al)\to
U_i\times U_\al\times V, \label{k3}
\eeq
of $Y\to \Si$ as a fibred manifold, but not a fibre bundle. They
possess the transition functions $(\si,v)\to
(\si'(x,\si),v'(x,\si,v))$ (\ref{iik1}).

If $G\to G/H$ and, consequently, $W\to G/H$ are trivial bundles,
then $U_\al=G/H$ and the charts (\ref{k3}) are fibre bundle charts
\be
&& O=\psi_i^{-1}(U_i\times W)=\pi_{YX}^{-1}(U_i)\to U_i\times W\to
U_i\times G/H\times V, \qquad O= \pi_{Y\Si}^{-1}(U_i\times G/H),
\ee
of $Y$ as a $P_\Si$-associated bundle.

\begin{lemma} \label{k8} \mar{k8}
By the well known theorem \cite{ste}, a fibre bundle $W\to G/H$
always is trivial over a Euclidean base $G/H$.
\end{lemma}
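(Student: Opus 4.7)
The plan is to deduce the triviality from the standard homotopy-classification principle for fibre bundles over paracompact bases: two homotopic maps into the base of a bundle induce isomorphic pullback bundles. Since a Euclidean space $\mathbb R^m\cong G/H$ is contractible, the identity map $\mathrm{id}_{G/H}$ is homotopic to a constant map $c:G/H\to\{\si_0\}\hookrightarrow G/H$. Applying the homotopy invariance to the bundle $W\to G/H$ of (\ref{wes}) yields
\be
W\cong \mathrm{id}_{G/H}^{\,*}W\cong c^*W\cong (G/H)\times W_{\si_0},
\ee
where $W_{\si_0}=\pi_{WH}^{-1}(\si_0)$ is the fibre over the chosen point; this is precisely a global trivialization.

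The first step is to record the contractibility of $G/H$, which follows by hypothesis together with the standard contraction of $\mathbb R^m$ to the origin. The second step is to invoke the homotopy-invariance theorem from Steenrod's monograph, which requires only that $G/H$ be paracompact (automatic for a smooth manifold) and that $W\to G/H$ be a locally trivial fibre bundle with structure group $H$ (established in Theorem \ref{k12}). The third step is the pullback computation above, noting that the pullback of any bundle along a constant map is canonically the trivial product bundle with fibre $W_{\si_0}$.

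The only delicate point — and the main technical content hidden in the citation — is the homotopy-invariance theorem itself, whose proof proceeds by covering the homotopy with a locally finite family of product neighbourhoods and inductively gluing local trivializations via the $H$-valued transition cocycle; this is exactly the content one finds in Steenrod. Since the present lemma simply quotes this result, no further work is required beyond the appeal, and the statement follows at once.
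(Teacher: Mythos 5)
Your argument is correct and is precisely the standard content behind the citation: the paper offers no proof of its own for this lemma, merely quoting Steenrod, and the homotopy-invariance-of-pullbacks argument over the contractible base $\mathbb R^m\cong G/H$ is exactly the theorem being quoted. Nothing is missing; at most one could note the equivalent shortcut that $G\to G/H$ admits a global section over a contractible paracompact base, so the associated bundle $W=(G\times V)/H$ is trivial for the same reason.
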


Any principal automorphism of a $G$-principal bundle $P\to X$,
being $G$-equivariant, also is $H$-equivariant and, thus, it is a
principal automorphism of a $H$-principal bundle $P\to\Si$.
Consequently, it yields an automorphism of the $P_\Si$-associated
bundle $Y$ (\ref{b3225}). Accordingly, every $G$-principal vector
field $\xi$ on $P\to X$ also is an $H$-principal vector field on
$P\to \Si$. It yield an infinitesimal gauge transformation
$\up_\xi$ of a composite bundle $Y$ seen as a $P$- and
$P_\Si$-associated bundle. It reads
\mar{rty}\beq
\up_\xi= \xi^\la\dr_\la + \xi^p(x^\m)J_p^m\dr_m +
\vt_\xi^a(x^\m,\si^k)I_a^A\dr_A, \label{rty}
\eeq
where $\{J_p\}$ is a representation of a Lie algebra $\cG$ of $G$
in $G/H$ and $\{I_a\}$ is a representation of a Lie algebra
$\mathfrak h$ of $H$ in $V$.

In view of Lemma \ref{k7} and Lemma \ref{k8}, we restrict our
consideration to groups $G$ and $H$ such that the quotient $G/H$
is an Euclidean space. As was mentioned above, this is the case of
a maximal compact subgroup $H \subset G$.

An additional condition on $G$ and $H$ is motivated by the
following fact \cite{book09}.

\begin{theorem} \label{k10} \mar{k10}
Let a Lie algebra $\cG$ of $G$ be the direct sum
\mar{g13}\beq
\cG = {\mathfrak h} \oplus {\mathfrak f} \label{g13}
\eeq
of a Lie algebra ${\mathfrak h}$ of $H$ and its supplement
$\mathfrak f$ obeying the commutation relations
\be
[{\mathfrak f},{\mathfrak f}]\subset {\mathfrak h}_r, \qquad
[{\mathfrak f},{\mathfrak h}_r]\subset \mathfrak f.
\ee
(e.g., $H$ is a Cartan subgroup of $G$). Let $A$ be a principal
connection on $P$. The ${\mathfrak h}$-valued component $A^h$ of
its pullback onto a reduced $H$-principal subbundle $P^h$ is a
principal connection on $P^h$. Moreover, there is a principal
connection on an $H$-principal bundle $P\to \Si$ so that its
restriction to very reduced $H$-principal subbundle $P^h$
coincides with $A^h$.
\end{theorem}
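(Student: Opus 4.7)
The plan splits naturally into two tasks mirroring the two assertions: (i) check that the $\mathfrak{h}$-component $A^h$ of the pullback $i_h^{*}A$ along the inclusion $i_h:P^h\hookrightarrow P$ is a principal $H$-connection on $P^h\to X$; (ii) construct a global $\mathfrak{h}$-valued 1-form $A_{\Si}$ on $P$ which is a principal $H$-connection on the $H$-bundle $P\to\Si$ and which satisfies $i_h^{*}A_{\Si}=A^h$ for every Higgs field $h$.

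For (i) I would use the standard characterization of a principal connection as a Lie-algebra-valued 1-form satisfying $A(\xi^{*})=\xi$ for every fundamental vector field and $R_g^{*}A=\mathrm{Ad}_{g^{-1}}A$. Writing $i_h^{*}A=A^h+A^f$ according to $\cG=\mathfrak{h}\oplus\mathfrak{f}$, the reproduction axiom is immediate: since $P^h$ is $H$-invariant, the fundamental fields $\xi^{*}$ with $\xi\in\mathfrak{h}$ are tangent to $P^h$, and $A(\xi^{*})=\xi\in\mathfrak{h}$ forces $A^h(\xi^{*})=\xi$ and $A^f(\xi^{*})=0$. The equivariance axiom for $A^h$ then reduces to showing that the algebraic projection $\pi_{\mathfrak{h}}:\cG\to\mathfrak{h}$ commutes with $\mathrm{Ad}_h$ for every $h\in H$; this is exactly where the bracket hypothesis $[\mathfrak{f},\mathfrak{h}_r]\subset\mathfrak{f}$ enters, upgrading upon exponentiation to $\mathrm{Ad}_H\mathfrak{f}\subset\mathfrak{f}$, and hence giving $\mathrm{Ad}_H$-invariance of the whole splitting. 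Applying $\pi_{\mathfrak{h}}$ to $R_h^{*}(i_h^{*}A)=\mathrm{Ad}_{h^{-1}}(i_h^{*}A)$ then yields $R_h^{*}A^h=\mathrm{Ad}_{h^{-1}}A^h$.

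For (ii) I would define $A_{\Si}:=\pi_{\mathfrak{h}}\circ A$ as an $\mathfrak{h}$-valued 1-form on all of $P$ and verify the two connection axioms directly on $P\to\Si$, whose structure group is only $H$: the reproduction axiom uses only $\mathfrak{h}$-fundamental vector fields and the equivariance uses exactly the same $\mathrm{Ad}_H$-invariance of the splitting as above. Restriction to a reduced subbundle is then transparent, since pullback commutes with an algebraic projection and $i_h^{*}A_{\Si}=\pi_{\mathfrak{h}}(i_h^{*}A)=A^h$. Geometrically, this construction enlarges the horizontal distribution of $A$ by the $\mathfrak{f}$-fundamental vertical subspace of $P\to X$, a subspace which is $H$-invariant for exactly the same reason.

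The main obstacle is the $\mathrm{Ad}_H$-invariance of the complement $\mathfrak{f}$: the bracket conditions only provide the infinitesimal version $[\mathfrak{h},\mathfrak{f}]\subset\mathfrak{f}$, so globalizing them requires either connectedness of $H$ or a choice of $\mathfrak{f}$ coming from an $\mathrm{Ad}_H$-invariant inner product on $\cG$, which is available in the paper's setting since $H$ is being taken as a maximal compact subgroup. A secondary point worth checking is that the discarded component $A^f$ transforms tensorially under $H$, so that the decomposition $i_h^{*}A=A^h+A^f$ is canonical and not a basis-dependent artefact; this confirms that $A^h$ really is an invariant of the pair $(A,h)$ and that the global connection $A_{\Si}$ is intrinsically defined from $A$.
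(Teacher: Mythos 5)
Your argument is correct, and it is essentially the only proof available: the paper itself gives no proof of this theorem, deferring to \cite{book09}, where the argument runs exactly as yours does --- compose the (pulled-back) connection form with the algebraic projection $\pi_{\mathfrak h}:\cG\to{\mathfrak h}$, check the reproduction axiom on ${\mathfrak h}$-fundamental fields, and deduce $H$-equivariance from the $\mathrm{Ad}_H$-invariance of the reductive splitting, noting that only $[{\mathfrak f},{\mathfrak h}]\subset{\mathfrak f}$ (not $[{\mathfrak f},{\mathfrak f}]\subset{\mathfrak h}$) is actually used. You are also right to flag the one genuine subtlety, that the bracket condition gives only infinitesimal invariance of ${\mathfrak f}$ and must be integrated to $\mathrm{Ad}_H{\mathfrak f}\subset{\mathfrak f}$, which holds because $H$ is connected in the intended applications (e.g.\ $H=SO(m)$) or because ${\mathfrak f}$ can be chosen as the orthogonal complement for an $\mathrm{Ad}_H$-invariant inner product when $H$ is compact.
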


If the decomposition (\ref{g13}) holds, the well-known non-linear
realization of a Lie group $G$ possessing a subgroup $H$
exemplifies the induced representation (\ref{iik1})
\cite{col,jos}. In fact, it is a representation of the Lie algebra
of $G$ around its origin as follows \cite{book09,sard08}.

In this case, there exists an open neighbourhood $U$ of the unit
$\bb\in G$ such that any element $g\in U$ is uniquely brought into
the form
\be
g=\exp(F)\exp(I), \qquad F\in{\mathfrak f}, \qquad I\in{\mathfrak
h}_r.
\ee
Let $U_G$ be an open neighbourhood of the unit of $G$ such that
$U_G^2\subset U$, and let $U_0$ be an open neighbourhood of the
$H$-invariant center $\si_0$ of the quotient $G/H$ which consists
of elements
\be
\si=g\si_0=\exp(F)\si_0, \qquad g\in U_G.
\ee
Then there is a local section $z(g\si_0)=\exp(F)$ of $G\to G/H$
over $U_0$. With this local section, one can define the induced
representation (\ref{iik1}) of elements $g\in U_G\subset G$ on
$U_0\times V$ given by the expressions
\mar{g5a}\beq
g\exp(F)=\exp(F')\exp(I'), \qquad  g:(\exp(F),v)\to
(\exp(F'),\exp(I')v). \label{g5a}
\eeq
The corresponding representation of a Lie algebra $\cG$ of $G$
takes the following form. Let $\{F_\al\}$, $\{I_a\}$ be the bases
for $\mathfrak f$ and ${\mathfrak h}$, respectively. Their
elements obey the commutation relations
\be
[I_a,I_b]= c^d_{ab}I_d, \qquad [F_\al,F_\bt]= c^d_{\al\bt}I_d,
\qquad [F_\al,I_b]= c^\bt_{\al b}F_\bt.
\ee
Then the relation (\ref{g5a}) leads to the formulas
\mar{g7a,7',8a'}\ben
&& F_\al: F\to F'=F_\al
+\op\sum_{k=1}l_{2k}[\op\ldots_{2k}[F_\al,F],F],\ldots,F]-
\label{g7a}\\
&& \qquad  l_n\op\sum_{n=1}
 [\op\ldots_n[F,I'],I'],\ldots,I'],\nonumber\\
&& \qquad I'=
\op\sum_{k=1}l_{2k-1}[\op\ldots_{2k-1}[F_\al,F],F],\ldots,F],
\label{g7'}\\
&& I_a: F\to F'=
2\op\sum_{k=1}l_{2k-1}[\op\ldots_{2k-1}[I_a,F],F],\ldots,F],
\label{g8a}\\
&& \qquad I'=I_a, \label{g8a'}
\een
where coefficients $l_n$, $n=1,\ldots$, are obtained from the
recursion relation
\be
\frac{n}{(n+1)!}=\op\sum_{i=1}^n\frac{l_i}{(n+1-i)!}.
\ee
Let $U_F$ be an open subset of the origin of the vector space
$\mathfrak f$ such that the series (\ref{g7a}) -- (\ref{g8a'})
converge for all $F\in U_F$, $F_\al\in\mathfrak f$ and
$I_a\in{\mathfrak h}$. Then the above mentioned non-linear
realization of a Lie algebra $\cG$ in $U_F\times V$ reads
\be
F_\al: (F=\si^\bt F_\bt,v)\to (F'=\si'^\bt F_\bt,I'v), \qquad
I_a:(F=\si^\bt F_\bt,v)\to (F'=\si'^\bt F_\bt,I'v),
\ee
where $F'$ and $I'$ are given by the expressions (\ref{g7a}) --
(\ref{g8a}). In physical models, the coefficients $\si^\al$ of
$F=\si^\al F_\al$ are treated as Goldstone fields.

A problem is that the series (\ref{g7a}) -- (\ref{g8a'}) fail to
be summarized in general, and one usually restrict them to the
terms of second degree in $\si^\al$.

In the case of a pseudo-orthogonal group $G=SO(1,m)$ and its
maximal compact subgroup $SO(m)$, one however can bring the
expressions (\ref{g7a}) -- (\ref{g8a'}) into an explicit form. Let
us note that, as it was required above, the quotient
$SO(1,m)/SO(m)$ is homeomorphic to $\mathbb R^m$, and we have the
Lie algebra decomposition (\ref{g13}) such that
\mar{k20}\beq
so(1,m)=so(m)+ \mathfrak f. \label{k20}
\eeq
For instance, this is the case of a Lorentz group $SO(1,3)$ and
its subgroup $SO(3)$ of spatial rotations.

A key point is that there is a monomorphism of a Lie algebra
$so(1,m)$ to a real Clifford algebra $\mathcal{C\ell }(m)$
modelled over a pseudo-Euclidean space $(\mathbb R^m, \eta)$ with
a pseudo-Euclidean metric $\eta$ of signature $(+,-...-)$
\cite{law}.

Given the generating basis
\be
\{\g_i\},\qquad \g_i\g_k+\g_k\g_i=2\dl_{lk},
\ee
for a Clifford algebra $\mathcal{C\ell }(m)$, the above mentioned
monomorphism reads
\mar{k21}\beq
I_{0k}=\g_k, \qquad I_{ik}=\frac14[\g_k,\g_i], \label{k21}
\eeq
where $\{I_{0k},I_{ik}\}$ is a basis for a Lie algebra $so(1,m)$
and, accordingly, $\{I_{ik}\}$ is a basis for a subalgebra $so(m)$
and $\{F_k=\g_k=I_{0k}\}$ is a basis for its complement $\mathfrak
f$. Substituting these expressions into the formulas (\ref{g5a})
-- (\ref{g8a'}), we obtain
\ben
&& \exp(\si^k\g_k)= \cosh\si +\frac{\si^i}{\si}\sinh\si \g_i, \qquad
\si^2=\dl_{lk}\si^i\si^k, \nonumber\\
&& F_m=I_{0m}=u^k_m\frac{\dr}{\dr \si^k}
=\frac{\si^k\si^m}{\si^2}\left(1-\frac{2\si\cosh(2\si)}{\sinh\si}\right)\partial_k
+ \frac{2\si\cosh(2\si)}{\sinh(2\si)}\partial_m, \label{k30}\\
&& \qquad I'=\frac{2\si^k}{\si\tanh\si} I_{km},\nonumber \\
&& I_{ik}=\si^i\dr_k - \si^k\dr_i, \qquad I'=I_{ik}. \nonumber
\een

\end{document}